\documentclass[preprint]{elsarticle} %numberref for arXiv
\usepackage{url}
\usepackage{mathtools}
\usepackage{mathrsfs}
\usepackage{algorithm}
\usepackage{algpseudocode}
\usepackage{amsmath,amssymb,amsthm}

\newtheorem{thm}{Theorem}
\newtheorem{lem}[thm]{Lemma}
\newproof{pf}{Proof}

\begin{document}

\title{Symmetric solution of the Bellman optimality equation for repeated harmony game }

\author[inst1,inst2]{Hisato Komatsu}

\ead{komatsu@phys.kindai.ac.jp}

\affiliation[inst1]{ organization={Department of Physics, Kindai University},
postcode={577-8502},
addressline={Higashi-Osaka}, 
city={Osaka},
country={Japan},}

\affiliation[inst2]{ organization={Data Science and AI Innovation Research Promotion Center, Shiga University},
%addressline={},
postcode={522-8522},
addressline={Hikone}, 
city={Shiga},
country={Japan} }%Lines break automatically or can be forced with \\

\begin{abstract}
In social dilemma games, additional rewards or punishments have been studied as means of promoting cooperation. Therefore, it is important to investigate the ideal situation, in which such an additional payoff would change the game.
In this study, we investigated the symmetric solution of the Bellman optimality equation for a repeated harmony game. The calculations showed that three types of symmetric solutions exist. One of them corresponds to the trivial All-C strategy, and another to the Win-stay Lose-shift strategy of the prisoners dilemma game. The nontrivial behavior of the strategy corresponding to the last solution is also discussed in detail. In addition, we numerically investigated which strategy the agents actually learn by the reinforcement learning algorithm.
\end{abstract}

\maketitle

\section{Introduction}

The methods for promoting cooperation in social dilemma games are important in game theory. One such method is the introduction of additional rewards or punishments by participants \cite{Fehr2000, Boyd2003} or third parties \cite{Fehr2004,Kwaadsteniet2013}. Along with recent advances in machine learning, multi-agent reinforcement learning methods incorporating such additional rewards have been proposed \cite{Yang2020}. Moreover, the learning dynamics of agents in the presence of additional rewards have also been studied \cite{Zhao2024}.
Various studies have been conducted on reinforcement learning applied to repeated games \cite {Sandholm1996, Forester2018, Hu2023}. Usui and Ueda investigated the symmetric solution, i.e., the solution in which the two agents have the same strategy, of the Bellman optimality equation for the repeated prisoner's dilemma \cite{Usui2021}. They showed that the All-D strategy creates an equilibrium regardless of the conditions, and the Win-stay Lose-shift (WSLS) \cite{Nowak1993} and Grim strategies create it when specific conditions are fulfilled. The Bellman optimality equation was also employed by Meylahn and Janssen to investigate the best response to an opponent with a fixed strategy \cite{Meylahn2022, Meylahn2025}.

The harmony game is a $2 \times 2$ game in which cooperation yields a higher payoff for an agent regardless of the opponent's strategy. This game is often excluded from theoretical considerations because cooperation is the dominant strategy, at least in one-shot games. However, several previous studies have reported that, even in this game, cooperation can collapse under certain evolutionary \cite{Roca2006} or learning dynamics \cite{Couto2026}, or the spatial structure \cite{Matamalas2015}, indicating that the emergence and maintenance of cooperation is not always trivial in practice. Moreover, in repeated games, the folk theorem ensures the existence of strategies that achieve a wide range of average payoffs \cite{Fudenberg1986}.
Therefore, before considering situations in which additional payoffs are introduced to promote cooperation, we first examine the ideal case in which such additional payoffs completely transform the game into a harmony game. This study investigated the Bellman optimality equation of the harmony game with an approach similar to that in Ref.~\cite{Usui2021} and lists possible strategies corresponding to symmetric solutions. In addition, we performed numerical experiments using reinforcement learning (Q-learning \cite{Watkins,Sutton}) to examine which strategies the agents learn under several initial conditions.

The remainder of this paper introduces this model in Sec.~\ref{model} and then derives the symmetric solutions of the Bellman optimality equation in Sec.~\ref{Bellman}. We compare these solutions with the numerical simulations in Sec.~\ref{experiments}, and summarize the study in Sec.~\ref{Summary}. For reproducibility, details of the simulations in Sec.~\ref{experiments} are provided in \ref{simulation_details}. We also discuss the optimal strategy when the opponent's strategy is fixed, in \ref{best_response}.

\section{Model \label{model} }

We consider the repeated harmony game, in which two agents (A and B) choose an action, either $C$ (cooperation) or $D$ (defection), at each time step. The reward for each agent, $r _A$ or $r _B$, depends on their joint action, $\left( a, b \right)$, according to
\begin{equation}
\begin{aligned}
r _A \left( \left. C, C \right. \right) & = r _B \left( \left. C, C \right. \right) = R , \\
r _A \left( \left. C, D \right. \right) & = r _B \left( \left. D, C \right. \right) = S , \\
r _A \left( \left. D, C \right. \right) & = r _B \left( \left. C, D \right. \right) = T , \\
r _A \left( \left. D, D \right. \right) & = r _B \left( \left. D, D \right. \right) = P ,
\end{aligned}
\end{equation}
with $R > T > P$ and $R > S > P$. Strictly speaking, the definition of the harmony game varies slightly across references. Our definition is identical to that used in Ref.~\cite{Matamalas2015}, for example. Throughout this paper, we use the term ``reward'' instead of ``payoff'' to align with the terminology of reinforcement learning.
In the one-shot game, the unique Nash equilibrium is $\left( C, C \right)$, and $C$ is the dominant strategy of both agents. Moreover, unlike the prisoner's dilemma, the one-shot game lacks a social dilemma. However, in a repeated game, the folk theorem ensures the existence of Nash equilibrium strategies other than the trivial All-C strategy \cite{Fudenberg1986}. Therefore, theoretical and numerical investigations should be carefully conducted to avoid overlooking such nontrivial strategies.
For simplicity, we focus on memory-1 strategies, in which agents forget any actions taken two steps ago or earlier and choose their current action based solely on their previous action. In this case, the observation of the agent was the previous joint action.

\section{Bellman optimality equation and its symmetric solution \label{Bellman} }

By ignoring exploration noise such as $\epsilon$ in $\epsilon$-greedy method, the Bellman optimality equation is expressed as follows:
\begin{eqnarray}
Q_A ^{\ast} \left( \left( a_0, b_0 \right), a_1 \right) & = & r_A \left( \left. a_1, \tilde{b} _1 \right. \right) + \gamma \max _{a _2} Q_A ^{\ast} \left( \left( a_1, \tilde{b}_1 \right) , a _2 \right) , \label{Bellman1A} \\
Q_B ^{\ast} \left( \left( a_0, b_0 \right), b_1 \right) & = & r_B \left( \left. \tilde{a}_1, b _1 \right. \right) + \gamma \max _{b _2} Q_B ^{\ast} \left( \left( \tilde{a}_1, b_1 \right) , b _2 \right) , \label{Bellman1B}
\end{eqnarray}
\begin{equation}
\mathrm{where} \ \ \tilde{a} _1 = \operatorname*{argmax} _{a_1} Q_A ^{\ast} \left( \left( a_0, b_0 \right), a_1 \right), \ \tilde{b} _1 = \operatorname*{argmax} _{b_1} Q_B ^{\ast} \left( \left( a_0, b_0 \right), b_1 \right) . \label{greedy1}
\end{equation}
To derive Eq.~(\ref{greedy1}), we assume that agents employ the greedy method.

Assuming symmetry of the Q-function between the two agents,
\begin{equation}
Q_A ^{\ast} \left( \left( a_0, b_0 \right), a \right) = Q_B ^{\ast} \left( \left( b_0, a_0 \right), a \right) ,
\end{equation}
Eq.~(\ref{Bellman1B}) can be reduced to Eq.~(\ref{Bellman1A}). Therefore, we consider Eq.~(\ref{Bellman1A}) under the following conditions:
\begin{equation}
\tilde{b} _1 = \operatorname*{argmax} _{b_1} Q_A ^{\ast} \left( \left( b_0, a_0 \right), b_1 \right) . \label{opponent_strategy1}
\end{equation}
In the following discussions, we ignore the case in which the Q-functions for the two actions accidentally coincide: $Q_A ^{\ast} \left( \left( a, b \right), C \right) = Q_A ^{\ast} \left( \left( a, b \right), D \right)$, considering that such cases rarely occur.

\subsection{Limitation on the symmetric solution \label{limitation} }

In this section, we prove two lemmas that characterize the allowed pairwise ordering of the Q-function.

\begin{lem}
$Q_A ^{\ast} \left( \left( D, D \right), C \right) > Q_A ^{\ast} \left( \left( D, D \right), D \right)$ \label{lem1}
\end{lem}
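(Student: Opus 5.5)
The plan is to argue by contradiction, using only the Bellman equation (\ref{Bellman1A}), the symmetry condition (\ref{opponent_strategy1}), and the inequality $S > P$. Suppose instead that $Q_A ^{\ast} \left( \left( D, D \right), D \right) > Q_A ^{\ast} \left( \left( D, D \right), C \right)$; ties are excluded by the paper's convention. The state $\left( D, D \right)$ is invariant under swapping the two agents. So by (\ref{opponent_strategy1}), the opponent's greedy action there is $\tilde{b}_1 = \operatorname*{argmax}_{b_1} Q_A ^{\ast} \left( \left( D, D \right), b_1 \right) = D$.

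First I will compute $Q_A ^{\ast} \left( \left( D, D \right), D \right)$ exactly. With $a_1 = D$ and $\tilde{b}_1 = D$, the next state is again $\left( D, D \right)$. By the hypothesis, the maximum over $a_2$ there is attained at $D$. Equation (\ref{Bellman1A}) then gives $Q_A ^{\ast} \left( \left( D, D \right), D \right) = P + \gamma Q_A ^{\ast} \left( \left( D, D \right), D \right)$, hence $Q_A ^{\ast} \left( \left( D, D \right), D \right) = P/(1-\gamma)$.

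Second, I will establish a uniform lower bound on all Q-values. Let $m = \min_{(a_0,b_0),a_1} Q_A ^{\ast} \left( \left( a_0, b_0 \right), a_1 \right)$; this minimum exists because there are only finitely many state-action pairs. Every reward is at least $P$, since $P$ is the smallest of $R,S,T,P$. Every $\max$ term is at least $m$. Applying (\ref{Bellman1A}) at a minimizing pair gives $m \ge P + \gamma m$, so $m \ge P/(1-\gamma)$.

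Third, I will evaluate the other action. With $a_1 = C$ and $\tilde{b}_1 = D$, equation (\ref{Bellman1A}) reads
\begin{equation*}
Q_A ^{\ast} \left( \left( D, D \right), C \right) = S + \gamma \max_{a_2} Q_A ^{\ast} \left( \left( C, D \right), a_2 \right) \ge S + \frac{\gamma P}{1-\gamma} > P + \frac{\gamma P}{1-\gamma} = \frac{P}{1-\gamma} ,
\end{equation*}
where the strict inequality uses $S > P$. This means $Q_A ^{\ast} \left( \left( D, D \right), C \right) > Q_A ^{\ast} \left( \left( D, D \right), D \right)$, which contradicts the assumption.

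The only delicate point is the uniform lower bound in the second step. It relies on finiteness of the state-action set and on $0 \le \gamma < 1$, and it must be derived from the fixed-point equation itself rather than from a trajectory argument. The symmetric-state observation in the first step, which forces $\tilde{b}_1 = D$, is what makes the self-loop computation of $Q_A ^{\ast} \left( \left( D, D \right), D \right)$ possible.
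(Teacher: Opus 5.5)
Your proof is correct and follows the paper's argument step for step. Like the paper, you assume the reverse ordering, use the swap-invariance of $(D,D)$ to get $\tilde{b}_1 = D$, solve the self-loop to obtain $Q_A^{\ast}((D,D),D) = P/(1-\gamma)$, and exceed it using $Q_A^{\ast} \geq P/(1-\gamma)$ together with $S > P$. The one difference is the bound $Q_A^{\ast} \geq P/(1-\gamma)$: the paper takes it from the definition of $Q_A^{\ast}$ as an expected discounted return, while you derive it from the fixed-point equation via the minimum over the finitely many state-action pairs, which is a tidy self-contained alternative; the trajectory argument is also legitimate, though, so the bound does not \emph{have} to be derived your way.
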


\begin{pf}

If we assume $Q_A ^{\ast} \left( \left( D, D \right), C \right) < Q_A ^{\ast} \left( \left( D, D \right), D \right)$, then Bellman equation for $Q_A ^{\ast} \left( \left( D, D \right), a_1 \right)$ is expressed as:
\begin{equation}
Q_A ^{\ast} \left( \left( D, D \right), C \right) = S + \gamma \max _{a_2} Q_A ^{\ast} \left( \left( C, D \right), a_2 \right) , \label{QA_DD_C_w} 
\end{equation}
\begin{equation}
Q_A ^{\ast} \left( \left( D, D \right), D \right) = P + \gamma \max _{a_2} Q_A ^{\ast} \left( \left( D, D \right), a_2 \right) = P + \gamma Q_A ^{\ast} \left( \left( D, D \right), D \right) . \label{QA_DD_D_w}
\end{equation}
Solving Eq.~(\ref{QA_DD_D_w}), we obtain
\begin{equation}
Q_A ^{\ast} \left( \left( D, D \right), D \right) = \frac{P}{1 - \gamma} . \label{QA_DD_D_w2}
\end{equation}
Considering that the reward at each step has a minimum value, $P$, the following inequality is derived from the original definition of the Q-function:
\begin{eqnarray}
Q_A ^{\ast} \left( \left( a_0, b_0 \right), a_1 \right) & = & E \left[ \sum _{t=0} ^{\infty} \gamma ^t r_A \left( \left. a_{t+1}, b_{t+1} \right. \right) \middle| \left( a_0, b_0 \right), a_1 \right] \nonumber \\
& \geq & \sum _{t=0} ^{\infty} \gamma ^t P = \frac{P}{1 - \gamma }
\end{eqnarray}
Substituting this into Eqs.~(\ref{QA_DD_C_w}) and (\ref{QA_DD_D_w2}), we obtain the following inequality:
\begin{equation}
Q_A ^{\ast} \left( \left( D, D \right), C \right) - Q_A ^{\ast} \left( \left( D, D \right), D \right) \geq S + \frac{\gamma P}{1 - \gamma} - \frac{P}{1 - \gamma } = S - P > 0 .
\end{equation}
However, this contradicts the first assumption, $Q_A ^{\ast} \left( \left( D, D \right), C \right) < Q_A ^{\ast} \left( \left( D, D \right), D \right)$. Thus, $Q_A ^{\ast} \left( \left( D, D \right), C \right) > Q_A ^{\ast} \left( \left( D, D \right), D \right)$. \qed
\end{pf}

Using Lemma \ref{lem1}, the Bellman optimality equation for $Q_A ^{\ast} \left( \left( D, D \right), a \right)$ can be expressed as
\begin{equation}
Q_A ^{\ast} \left( \left( D, D \right), C \right) = R + \gamma \max _{a_2} Q_A ^{\ast} \left( \left( C, C \right), a_2 \right) , \label{QA_DD_C_t} 
\end{equation}
\begin{equation}
Q_A ^{\ast} \left( \left( D, D \right), D \right) = T + \gamma \max _{a_2} Q_A ^{\ast} \left( \left( D, C \right), a_2 \right) , \label{QA_DD_D_t}
\end{equation}

\begin{lem}
Only the following two order-relation patterns are permitted:
\begin{equation}
 Q_A ^{\ast} \left( \left( D, C \right), C \right) > Q_A ^{\ast} \left( \left( D, C \right), D \right) \ \mathrm{and} \ Q_A ^{\ast} \left( \left( C, D \right), C \right) > Q_A ^{\ast} \left( \left( C , D \right), D \right) \label{different_C}
\end{equation}
\begin{equation}
 Q_A ^{\ast} \left( \left( D, C \right), C \right) < Q_A ^{\ast} \left( \left( D, C \right), D \right) \ \mathrm{and} \ Q_A ^{\ast} \left( \left( C, D \right), C \right) < Q_A ^{\ast} \left( \left( C , D \right), D \right) \label{different_D}
\end{equation}
\end{lem}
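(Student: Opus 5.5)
The plan is to rule out the two mixed patterns directly, using the symmetry condition (\ref{opponent_strategy1}) to determine the opponent's action in each state. Throughout, abbreviate $Q_{xy,a} = Q_A^{\ast}((x,y),a)$. The key observation is that when agent A is in state $(D,C)$, agent B sees $(C,D)$, and vice versa. So in a mixed pattern the opponent's action at $(D,C)$ is A's own greedy action at $(C,D)$, and conversely. This makes both Bellman equations at these two states explicit. Lemma~\ref{lem1} and Eqs.~(\ref{QA_DD_C_t})--(\ref{QA_DD_D_t}) then supply the comparison needed.

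\emph{First mixed case:} $Q_{DC,C} > Q_{DC,D}$ and $Q_{CD,D} > Q_{CD,C}$. At $(C,D)$ the opponent plays $C$, because it sees $(D,C)$. The Bellman equation then gives
\begin{equation}
Q_{CD,C} = R + \gamma \max_{a_2} Q_{CC,a_2}, \qquad Q_{CD,D} = T + \gamma \max_{a_2} Q_{DC,a_2} = T + \gamma Q_{DC,C}.
\end{equation}
The right-hand side of $Q_{CD,C}$ is identical to that of Eq.~(\ref{QA_DD_C_t}), so $Q_{CD,C} = Q_{DD,C}$. Likewise, by Eq.~(\ref{QA_DD_D_t}) and the assumed ordering at $(D,C)$, $Q_{DD,D} = T + \gamma Q_{DC,C} = Q_{CD,D}$. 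Lemma~\ref{lem1} gives $Q_{DD,C} > Q_{DD,D}$, hence $Q_{CD,C} > Q_{CD,D}$. This contradicts the assumption.

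\emph{Second mixed case:} $Q_{DC,D} > Q_{DC,C}$ and $Q_{CD,C} > Q_{CD,D}$. At $(D,C)$ the opponent sees $(C,D)$ and therefore plays $C$. Hence
\begin{equation}
Q_{DC,D} = T + \gamma \max_{a_2} Q_{DC,a_2} = T + \gamma Q_{DC,D}, \qquad Q_{DC,C} = R + \gamma \max_{a_2} Q_{CC,a_2}.
\end{equation}
Solving the first equation gives $Q_{DC,D} = T/(1-\gamma)$, as in the proof of Lemma~\ref{lem1}. The second equation coincides with Eq.~(\ref{QA_DD_C_t}), so $Q_{DC,C} = Q_{DD,C}$. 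By Eq.~(\ref{QA_DD_D_t}) and the assumed ordering, $Q_{DD,D} = T + \gamma Q_{DC,D} = T/(1-\gamma)$. Lemma~\ref{lem1} then yields $Q_{DC,C} = Q_{DD,C} > Q_{DD,D} = Q_{DC,D}$, contradicting the assumption at $(D,C)$.

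The main obstacle is choosing which state to derive the contradiction at. In the second case, comparing at $(C,D)$ does not work: it leads to the requirement $S/(1-\gamma) \le P + \gamma T/(1-\gamma)$, which need not hold. The step that works in both cases is to notice that in state $(C,D)$ (first case) or $(D,C)$ (second case) the opponent plays $C$. That makes $Q(\cdot,C)$ equal to $Q_{DD,C}$ and $Q(\cdot,D)$ equal to $Q_{DD,D}$, so Lemma~\ref{lem1} transfers directly. Once the two mixed patterns are excluded, only (\ref{different_C}) and (\ref{different_D}) remain.
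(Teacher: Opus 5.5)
Your proof is correct and follows the paper's argument. In each mixed pattern you identify the state where the opponent plays $C$, match the Bellman right-hand sides with Eqs.~(\ref{QA_DD_C_t})--(\ref{QA_DD_D_t}), and transfer Lemma~\ref{lem1} to get the contradiction; your second case is the one the paper writes out, and your first case spells out the paper's ``similarly'' step, with the contradiction at $(C,D)$. Solving for $Q_{DC,D}=T/(1-\gamma)$ is harmless but unnecessary, because $Q_A^{\ast}((D,C),D)$ and $Q_A^{\ast}((D,D),D)$ both equal $T+\gamma\max_{a_2}Q_A^{\ast}((D,C),a_2)$.
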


\begin{pf}
Assuming $Q_A ^{\ast} \left( \left( D, C \right), C \right) < Q_A ^{\ast} \left( \left( D, C \right), D \right)$ and $Q_A ^{\ast} \left( \left( C, D \right), C \right) > Q_A ^{\ast} \left( \left( C , D \right), D \right)$, we obtain the following equations:
\begin{equation}
Q_A ^{\ast} \left( \left( D, C \right), C \right) = R + \gamma \max _{a_2} Q_A ^{\ast} \left( \left( C, C \right), a_2 \right) = Q_A ^{\ast} \left( \left( D, D \right), C \right) , \label{QA_DC_C_w} 
\end{equation}
\begin{equation}
Q_A ^{\ast} \left( \left( D, C \right), D \right) = T + \gamma \max _{a_2} Q_A ^{\ast} \left( \left( D, C \right), a_2 \right) = Q_A ^{\ast} \left( \left( D, D \right), D \right) , \label{QA_DC_D_w}
\end{equation}
Here, Eqs. (\ref{QA_DD_C_t}) and (\ref{QA_DD_D_t}) were used to derive these equations. Considering Eqs. (\ref{QA_DC_C_w}) and (\ref{QA_DC_D_w}), and Lemma \ref{lem1}, the following inequality holds:
\begin{equation}
Q_A ^{\ast} \left( \left( D, C \right), C \right) = Q_A ^{\ast} \left( \left( D, D \right), C \right) > Q_A ^{\ast} \left( \left( D, D \right), D \right) = Q_A ^{\ast} \left( \left( D, C \right), D \right) .
\end{equation}
However, this result contradicts the assumption $Q_A ^{\ast} \left( \left( D, C \right), C \right) < Q_A ^{\ast} \left( \left( D, C \right), D \right)$. Similarly, $Q_A ^{\ast} \left( \left( D, C \right), C \right) > Q_A ^{\ast} \left( \left( D, C \right), D \right)$ and $Q_A ^{\ast} \left( \left( C, D \right), C \right) < Q_A ^{\ast} \left( \left( C, D \right), D \right)$ produces the contradiction.

Therefore, Eqs.~(\ref{different_C}) and (\ref{different_D}) are the only permitted order-relation patterns. \qed
\end{pf}
 
\subsection{Explicit solutions \label{sym_solutions}}

Next, we derive explicit solutions for the four cases allowed by the lemmas in Sec.~\ref{limitation}. Each solution has the same form as in Ref.~\cite{Usui2021} because the structure of the Bellman equation does not depend directly on the relationship among $R, T, S$, and $P$, but on the assumed pairwise ordering of the Q-function.

In this discussion, we use the following expression:
\begin{equation}
\begin{aligned}
x_1 & \equiv R + \gamma \max _{a_2} Q_A ^{\ast} \left( \left( C, C \right), a_2 \right) , \\
x_2 & \equiv T + \gamma \max _{a_2} Q_A ^{\ast} \left( \left( D, C \right), a_2 \right) , \\
x_3 & \equiv S + \gamma \max _{a_2} Q_A ^{\ast} \left( \left( C, D \right), a_2 \right) , \\
x_4 & \equiv P + \gamma \max _{a_2} Q_A ^{\ast} \left( \left( D, D \right), a_2 \right) , \label{x_1234}
\end{aligned}
\end{equation}
for later convenience. This expression is primarily used in \ref{best_response}

\subsubsection{ The case $Q_A ^{\ast} \left( \left( C, C \right), C \right) > Q_A ^{\ast} \left( \left( C, C \right), D \right)$, $Q_A ^{\ast} \left( \left( C, D \right), C \right) > Q_A ^{\ast} \left( \left( C , D \right), D \right) $, $Q_A ^{\ast} \left( \left( D, C \right), C \right) > Q_A ^{\ast} \left( \left( D, C \right), D \right) $, and $Q_A ^{\ast} \left( \left( D, D \right), C \right) > Q_A ^{\ast} \left( \left( D, D \right), D \right) $ }

This corresponds to an All-C strategy.
The solution is given as
\begin{equation}
\begin{aligned}
x_1 &= Q_A ^{\ast} \left( \left( C, C \right), C \right) = Q_A ^{\ast} \left( \left( C, D \right), C \right) = Q_A ^{\ast} \left( \left( D, C \right), C \right) = Q_A ^{\ast} \left( \left( D, D \right), C \right) = \frac{1}{1 - \gamma} R   \\
x_2 &= Q_A ^{\ast} \left( \left( C, C \right), D \right) = Q_A ^{\ast} \left( \left( C, D \right), D \right) = Q_A ^{\ast} \left( \left( D, C \right), D \right) = Q_A ^{\ast} \left( \left( D, D \right), D \right) = T + \frac{\gamma}{1 - \gamma} R  \label{Q_AllC}
\end{aligned}
\end{equation}
This is always consistent with the condition of the harmony game.

\subsubsection{ The case $Q_A ^{\ast} \left( \left( C, C \right), C \right) > Q_A ^{\ast} \left( \left( C , C \right), D \right)$, $Q_A ^{\ast} \left( \left( C, D \right), C \right) < Q_A ^{\ast} \left( \left( C , D \right), D \right) $, $Q_A ^{\ast} \left( \left( D, C \right), C \right) < Q_A ^{\ast} \left( \left( D, C \right), D \right) $, and $Q_A ^{\ast} \left( \left( D, D \right), C \right) > Q_A ^{\ast} \left( \left( D, D \right), D \right) $ }

This corresponds to the WSLS strategy in the prisoner's dilemma \cite{Nowak1993}. In this study, we use the term ``WSLS'' for convenience, although the notions of ``win'' and ``lose'' in the harmony game differ from those in the prisoner's dilemma.
The solution is given as
\begin{equation}
\begin{aligned}
x_1 &= Q_A ^{\ast} \left( \left( C, C \right), C \right) = Q_A ^{\ast} \left( \left( D, D \right), C \right) =  \frac{1}{1 - \gamma} R  \\
x_2 &= Q_A ^{\ast} \left( \left( C, C \right), D \right) = Q_A ^{\ast} \left( \left( D, D \right), D \right) =  T + \gamma P + \frac{\gamma^2}{1 - \gamma} R ,  \\
x_3 &= Q_A ^{\ast} \left( \left( C, D \right), C \right) = Q_A ^{\ast} \left( \left( D, C \right), C \right) =  S + \gamma P + \frac{\gamma^2}{1 - \gamma} R ,  \\
x_4 &= Q_A ^{\ast} \left( \left( C, D \right), D \right) = Q_A ^{\ast} \left( \left( D, C \right), D \right) =  P + \frac{\gamma}{1 - \gamma} R , \label{Q_WSLS}
\end{aligned}
\end{equation}
This solution is consistent with the condition of the game when $\gamma > \frac{S-P}{R-P}$.

\subsubsection{ The case $Q_A ^{\ast} \left( \left( C, C \right), C \right) < Q_A ^{\ast} \left( \left( C , C \right), D \right)$, $Q_A ^{\ast} \left( \left( C, D \right), C \right) > Q_A ^{\ast} \left( \left( C , D \right), D \right) $, $Q_A ^{\ast} \left( \left( D, C \right), C \right) > Q_A ^{\ast} \left( \left( D, C \right), D \right) $, and $Q_A ^{\ast} \left( \left( D, D \right), C \right) > Q_A ^{\ast} \left( \left( D, D \right), D \right) $ }

In this case, the solution is given as
\begin{equation}
\begin{aligned}
x_1 &= Q_A ^{\ast} \left( \left( C, D \right), C \right) = Q_A ^{\ast} \left( \left( D, C \right), C \right) = Q_A ^{\ast} \left( \left( D, D \right), C \right) =  \frac{1}{1 - \gamma ^2} (R + \gamma P) ,  \\
x_2 &= Q_A ^{\ast} \left( \left( C, D \right), D \right) = Q_A ^{\ast} \left( \left( D, C \right), D \right) = Q_A ^{\ast} \left( \left( D, D \right), D \right) =  T + \frac{\gamma}{1 - \gamma ^2} (R + \gamma P) ,  \\
x_3 &= Q_A ^{\ast} \left( \left( C, C \right), C \right) =  S + \frac{\gamma}{1 - \gamma ^2} (R + \gamma P) ,  \\
x_4 &= Q_A ^{\ast} \left( \left( C, C \right), D \right) =  \frac{1}{1 - \gamma ^2} (P + \gamma R)  
\end{aligned}
\end{equation}
However, this solution contradicts the condition $S > P$.

\subsubsection{ The case $Q_A ^{\ast} \left( \left( C, C \right), C \right) < Q_A ^{\ast} \left( \left( C , C \right), D \right)$, $Q_A ^{\ast} \left( \left( C, D \right), C \right) < Q_A ^{\ast} \left( \left( C , D \right), D \right) $, $Q_A ^{\ast} \left( \left( D, C \right), C \right) < Q_A ^{\ast} \left( \left( D, C \right), D \right) $, and $Q_A ^{\ast} \left( \left( D, D \right), C \right) > Q_A ^{\ast} \left( \left( D, D \right), D \right) $ }

In this case, the solution is given as
\begin{equation}
\begin{aligned}
x_1 &= Q_A ^{\ast} \left( \left( D, D \right), C \right) =  \frac{1}{1 - \gamma^2} (R + \gamma P ) , \\
x_2 &= Q_A ^{\ast} \left( \left( D, D \right), D \right) =  T + \frac{\gamma}{1 - \gamma^2} (P + \gamma R ) , \\
x_3 &= Q_A ^{\ast} \left( \left( C, C \right), C \right) = Q_A ^{\ast} \left( \left( C, D \right), C \right) = Q_A ^{\ast} \left( \left( D, C \right), C \right) =  S + \frac{\gamma}{1 - \gamma^2} (P + \gamma R ) , \\
x_4 &= Q_A ^{\ast} \left( \left( C, C \right), D \right) = Q_A ^{\ast} \left( \left( C, D \right), D \right) = Q_A ^{\ast} \left( \left( D, C \right), D \right) =  \frac{1}{1 - \gamma^2} (P + \gamma R ) . \label{Q_FH}
\end{aligned}
\end{equation}
This solution is consistent with the condition of the game when $\gamma > \frac{S-P}{R-S}$.

In this strategy, an agent chooses $C$ if and only if both agents chose $D$ in the previous step. In the repeated prisoner's dilemma, the ``handshake mechanisms'', in which agents use particular initial action patterns as the signal for cooperation, have been proposed \cite{Li2011, Knight2018}. Compared to these mechanisms, this strategy seems to require frequent ``$D$'' handshakes from the opponent at an appropriate time as a condition of cooperation. Hence, in the following, we call this strategy the ``Frequent Handshake (FH) strategy''. Note that this strategy is termed the ``anti-Grim trigger'' in some studies \cite{Meylahn2025}, however, the definition of the term ``anti-Grim trigger'' is confusing because its definition varies across the literature.

The FH strategy may seem odd because it deliberately destroys cooperation even though defection yields a lower reward. Therefore, to illustrate the actual dynamics, we examine several cases in which agent A uses specific strategies while agent B's strategy is fixed as an FH strategy.
First, if agent A adopts the FH strategy, the sequence converges to the following limit cycle:
\begin{equation}
(C, C) \rightarrow (D, D) \rightarrow (C, C) \rightarrow ... . \label{CD_repeat}
\end{equation}
In this case, the long-term average of agent A's reward is $\frac{R + P}{2}$.
For agent A using the WSLS strategy, the sequence converges to the following limit cycle:
\begin{equation}
(C, C) \rightarrow (C, D) \rightarrow (D, D) \rightarrow (C, C) \rightarrow ... ,
\end{equation}
and the long-term average of agent A's reward is $\frac{R + S + P}{3}$.
If agent A emoplys the All-C strategy, the sequence converges to a fixed point,
\begin{equation}
(C, D) ,
\end{equation}
and the long-term average of agent A's reward is $S$. Among the three strategies, the FH strategy achieves the highest long-term average when 
\begin{equation}
\frac{R + P}{2} > S . \label{FH_condition_gamma1}
\end{equation}
The condition that the FH strategy appears, $\gamma > \frac{S-P}{R-S}$, converges to Eq.~(\ref{FH_condition_gamma1}) when $\gamma \rightarrow 1$. In addition, considering the following:
\begin{equation}
S = \min _b \max _a r_A \left( \left. a, b \right. \right) , 
\end{equation}
Eq.~(\ref{FH_condition_gamma1}) has the same form as the condition under which the folk theorem holds (for $\gamma$ sufficiently close to 1) \cite{Fudenberg1986}. %Indeed, the FH strategy penalizes deviations from the particular pattern specified in Eq.~(\ref{CD_repeat}), similar to the trigger strategies used in the proof of the folk theorem.

The best response of agent A under the fixed strategy of agent B is investigated in detail in \ref{best_response}.

\section{Numerical experiments \label{experiments}}

In this section, we numerically investigate the strategies adopted by actual reinforcement learning agents.
Here, Q-learning \cite{Watkins,Sutton} was employed as the learning algorithm, and the two agents independently updated their Q-functions $Q_A$ and $Q_B$. 
In other words, these functions were updated after the reward was calculated at each step using the following equation:
\begin{equation}
\begin{aligned}
Q_A \left( \left( a_{t-1}, b_{t-1} \right) , a_t \right) \leftarrow & Q_A \left( \left( a_{t-1}, b_{t-1} \right) , a_t \right) \\ & + \alpha \left[ r_A \left( a_{t}, b_{t} \right) + \gamma \max _{a'} Q_A \left( \left( a_{t}, b_{t} \right) , a' \right) - Q_A \left( \left( a_{t-1}, b_{t-1} \right) , a_t \right) \right] , \\
Q_B \left( \left( a_{t-1}, b_{t-1} \right) , a_t \right) \leftarrow & Q_B \left( \left( a_{t-1}, b_{t-1} \right) , a_t \right) \\ & + \alpha \left[ r_B \left( a_{t}, b_{t} \right) + \gamma \max _{b'} Q_B \left( \left( a_{t}, b_{t} \right) , b' \right) - Q_B \left( \left( a_{t-1}, b_{t-1} \right) , b_t \right) \right] , \label{Q_update}
\end{aligned}
\end{equation}
where $\alpha$ denotes the learning rate. The rewards for each joint action were set as $R=3$, $T=2$, $S=1$, and $P=0$.

One episode is truncated when the time step reaches $T_{\mathrm{ep} } = 100$, but the truncation is not treated as an episode termination when updating the Q-functions according to Eq.~(\ref{Q_update}). In the actual implementation, we realized this by recording the global time step $t$ and resetting the environment when $t$ was a multiple of $T_{\mathrm{ep} } = 100$; that is, when $t \mathbin{\%} T_{\mathrm{ep} } = t \mathbin{\%} 100 = 0$.
At the beginning of each episode, the observation of agents, $\left( a, b \right)$, is uniformly randomly chosen from $\left( C, C \right)$, $\left( C, D \right)$, $\left( D, C \right)$, and $\left( D, D \right)$.
In other words, the previous actions observed by the agents in the initial state are selected uniformly at random.
At each step, the agents choose their actions according to the $\epsilon$-greedy method.
For each agent, $i = A, B$, the initial value of $\epsilon _i$ is set to $\epsilon _{\mathrm{init} }$, and $\epsilon_i$ is subsequently decreased using the following equation:
\begin{equation}
\epsilon _i \leftarrow \max \left[ \epsilon _{\mathrm{min} } , (1 - \delta _{\epsilon}) \epsilon _i \right] ,
\end{equation}
We considered two values of $\epsilon _{\mathrm{init} }$: $\epsilon _{\mathrm{init} } = 1$ and $\epsilon _{\mathrm{init} } = 0.02$, whereas the other hyperparameters were fixed, as given in Table~\ref{hyperparameters} of \ref{simulation_details}. The detailed pseudocode is provided by Algorithm \ref{alg1} in this appendix.

Several initial Q-functions were also considered: $Q_{\mathrm{init} } = Q_{\mathrm{zero} }$, $Q_{\mathrm{WSLS} } $, and $Q_{\mathrm{FH} }$, where
\begin{equation}
Q_{\mathrm{zero} } \left( \left( a, b \right) , a' \right) = 0 \ \ \mathrm{for \ any \ } a, b, a' ,
\end{equation}
\begin{equation}
\begin{aligned}
Q_{\mathrm{WSLS} } \left( \left( C, C \right) , C \right) & = Q_{\mathrm{WSLS} } \left( \left( C, D \right) , D \right) = Q_{\mathrm{WSLS} } \left( \left( D, C \right) , D \right) = Q_{\mathrm{WSLS} } \left( \left( D, D \right) , C \right) = 1 , \\
Q_{\mathrm{WSLS} } \left( \left( C, C \right) , D \right) & = Q_{\mathrm{WSLS} } \left( \left( C, D \right) , C \right) = Q_{\mathrm{WSLS} } \left( \left( D, C \right) , C \right) = Q_{\mathrm{WSLS} } \left( \left( D, D \right) , D \right) = 0 , 
\end{aligned}
\end{equation}
and
\begin{equation}
\begin{aligned}
Q_{\mathrm{FH} } \left( \left( C, C \right) , D \right) & = Q_{\mathrm{FH} } \left( \left( C, D \right) , D \right) = Q_{\mathrm{FH} } \left( \left( D, C \right) , D \right) = Q_{\mathrm{FH} } \left( \left( D, D \right) , C \right) = 1 , \\
Q_{\mathrm{FH} } \left( \left( C, C \right) , C \right) & = Q_{\mathrm{FH} } \left( \left( C, D \right) , C \right) = Q_{\mathrm{FH} } \left( \left( D, C \right) , C \right) = Q_{\mathrm{FH} } \left( \left( D, D \right) , D \right) = 0 ,
\end{aligned}
\end{equation}
with 
\begin{equation}
\left. Q_A \left( \left( a, b \right) , a' \right) \right| _{t=0} = \left. Q_B \left( \left( b, a \right) , a' \right) \right| _{t=0} = Q_{\mathrm{init} } \left( \left( a, b \right) , a' \right) . \label{Q_initialization}
\end{equation}
As seen later, when $Q_{\mathrm{init} } = Q_{\mathrm{zero} }$, the agents adopt the All-C strategy with high probability. Therefore, we also investigated two other initial Q-functions, $Q_{\mathrm{WSLS} }$ and $Q_{\mathrm{FH} }$, under which agents tend to employ WSLS and FH strategies.

When tuning hyperparameters, we conducted 20 or 100 trials using a random seed different from that used when collecting the reported data. Fig.~\ref{Q-numerical} shows the time evolution of the Q-function in each case. In this figure, the solid curves and shaded areas show the median and interquartile intervals (both computed using the \texttt{numpy.quantile()}) over 1000 independent trials. The graphs of the cases with $Q_{\mathrm{init} } = Q_{\mathrm{WSLS} }$ or $Q_{\mathrm{FH} }$ and $\epsilon_{\mathrm{init} } = 1$ are omitted because their shapes are almost identical to the corresponding graph for $Q_{\mathrm{init} } = Q_{\mathrm{zero} }$ and $\epsilon_{\mathrm{init} } = 1$. Table~\ref{table_strategies} lists the number of trials (out of 1000) in which both agents selected each strategy. In this table, the learned strategy is identified by the ordering of the Q-functions at the end of the simulation, and the cases in which the two agents learned different strategies are excluded. According to the table, agents almost always learn the All-C strategy when $\epsilon_{\mathrm{init} } = 1$, whereas the learned strategy depends on $Q_{\mathrm{init} }$ when $\epsilon_{\mathrm{init} } = 0.02$. Specifically, agents learned the WSLS strategy when $Q_{\mathrm{init} } = Q_{\mathrm{WSLS} }$ and the FH strategy when $Q_{\mathrm{init} } = Q_{\mathrm{FH} }$ in most trials. Under $Q_{\mathrm{init} } = Q_{\mathrm{zero} }$ and $\epsilon_{\mathrm{init} } = 0.02$, the All-C strategy was learned most frequently. However, the WSLS and FH strategies were sometimes learned. Comparing the results for $Q_{\mathrm{init} } = Q_{\mathrm{zero} }$ with different values of $\epsilon_{\mathrm{init} }$, the frequency with which both agents learned the All-C strategy is higher when $\epsilon_{\mathrm{init} } = 1$ than when $\epsilon_{\mathrm{init} } = 0.02$. A possible explanation is that when the opponent behaves randomly, independent of the previous action, the appropriate strategy is always to choose $C$ as in the one-shot game.

To compare the simulation results with the theoretical values discussed in \ref{sym_solutions}, we plot the symmetric solutions of the Bellman optimality equation as dotted lines in Fig.~\ref{Q-numerical}. These lines correspond to the All-C strategy (Eq.~(\ref{Q_AllC})) in graphs (a) and (b), the WSLS strategy (Eq.~(\ref{Q_WSLS})) in (c), and the FH strategy (Eq.~(\ref{Q_FH})) in (d). Fig.~\ref{Q-numerical} indicates that the medians of the Q-functions converge to a symmetric solution corresponding to the strategies most frequently learned in the simulations. Note that in this simulation, strategies learned with low frequency behaved as outliers, which caused the average over the trials to deviate from the theoretical values. Therefore, we have plotted the median and interquartile intervals in Fig.~\ref{Q-numerical}.
\begin{figure}
\includegraphics[width = 12.0cm]{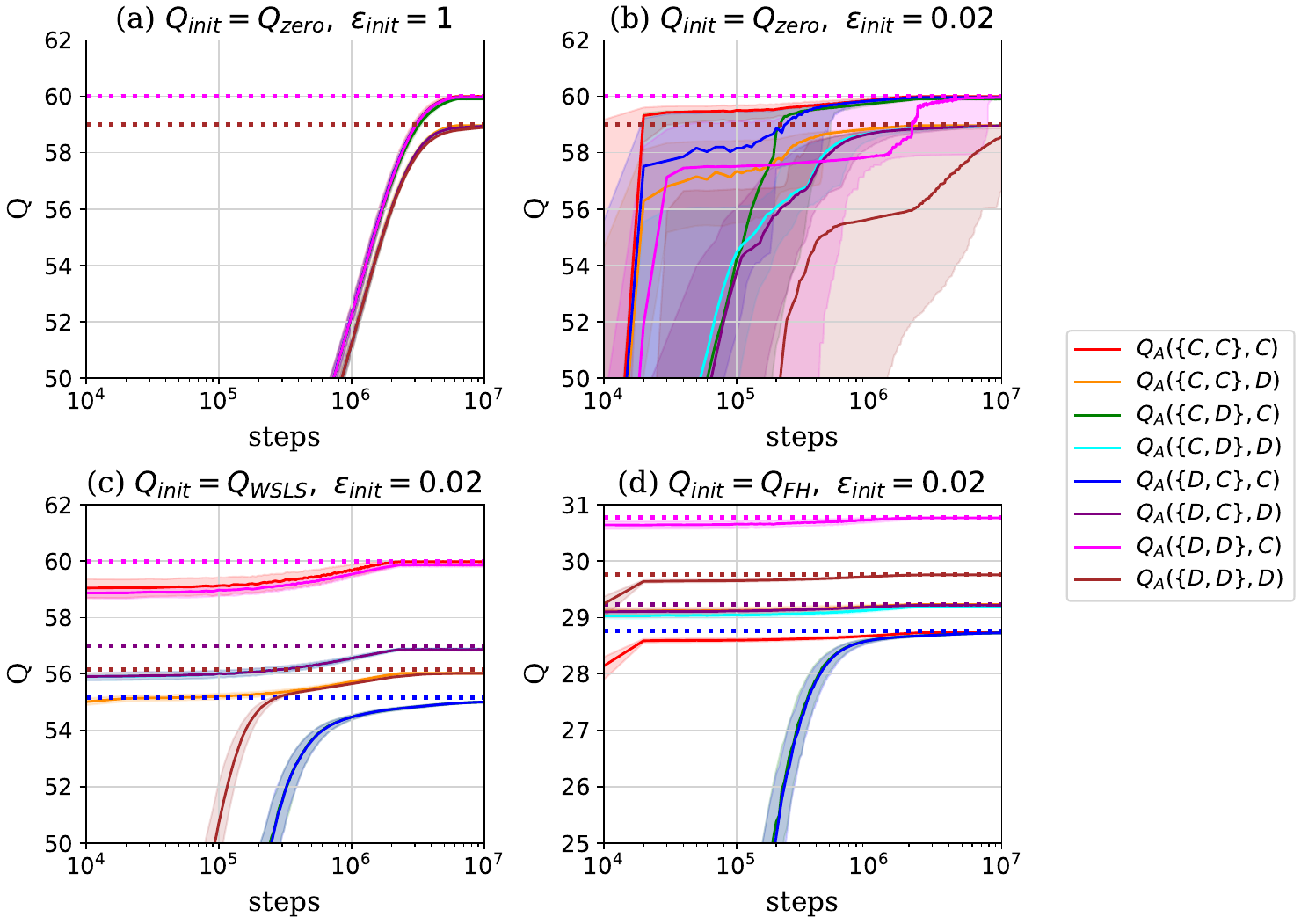}
\caption{Time evolution of the Q-function (of agent A)  with (a) $Q_{\mathrm{init} } = Q_{\mathrm{zero} }$ and $\epsilon_{\mathrm{init} } = 1$, (b) $Q_{\mathrm{init} } = Q_{\mathrm{zero} }$ and $\epsilon_{\mathrm{init} } = 0.02$, (c) $Q_{\mathrm{init} } = Q_{\mathrm{WSLS} }$ and $\epsilon_{\mathrm{init} } = 0.02$, and (d) $Q_{\mathrm{init} } = Q_{\mathrm{FH} }$ and $\epsilon_{\mathrm{init} } = 0.02$. The solid curves and shaded areas of each color represent the median and interquartile intervals of each component $Q_A \left( \left( a, b \right) , a' \right) $. The dotted lines are the corresponding components of the symmetric solution of the Bellman optimality equation, yielding (a), (b) the All-C strategy (Eq.~(\ref{Q_AllC})); (c) the WSLS strategy (Eq.~(\ref{Q_WSLS})); and (d) the FH strategy (Eq.~(\ref{Q_FH})). }
\label{Q-numerical}
\end{figure}

\begin{table*}
\centering
\hspace{-4.0cm}
\begin{tabular}{cccc}
initial condition & All-C & WSLS & FH \\ \hline\hline
$Q_{\mathrm{init} } = Q_{\mathrm{zero} }$, \ $\epsilon_{\mathrm{init} } = 1$ & 1000 & 0 & 0 \\ \hline
$Q_{\mathrm{init} } = Q_{\mathrm{WSLS} }$, \ $\epsilon_{\mathrm{init} } = 1$ & 1000 & 0 & 0 \\ \hline
$Q_{\mathrm{init} } = Q_{\mathrm{FH} }$, \ $\epsilon_{\mathrm{init} } = 1$ & 1000 & 0 & 0 \\ \hline
$Q_{\mathrm{init} } = Q_{\mathrm{zero} }$, \ $\epsilon_{\mathrm{init} } = 0.02$ & 823 & 64 & 13 \\ \hline
$Q_{\mathrm{init} } = Q_{\mathrm{WSLS} }$, \ $\epsilon_{\mathrm{init} } = 0.02$ & 1 & 999 & 0 \\ \hline
$Q_{\mathrm{init} } = Q_{\mathrm{FH} }$, \ $\epsilon_{\mathrm{init} } = 0.02$ & 0 & 7 &993 \\ \hline
\end{tabular}
\vspace{1.0mm}
\caption{Number of trials (out of 1000) in which both agents learned each strategy. }
\label{table_strategies}
\end{table*}

%\subsection{fixed strategy}
%We first simulate agent A's learning while fixing agent B's strategy, to certificate the action which strategy yields the highest score against a fixed opponent.

\section{Summary \label{Summary} }

In this study, we analyzed the Bellman optimality equation for a repeated harmony game and identified three symmetric solutions: All-C, WSLS, and a third strategy, which we referred to as the Frequent Handshake (FH). In the FH strategy, the agent chooses $C$ (cooperation) only when both it and the opponent have chosen $D$ (defection) in the previous step. If both agents employ this strategy, they repeat $\left( C, C \right)$ and $\left( D, D \right)$ alternately and punish deviations from this pattern by choosing $D$.

Numerical experiments using Q-learning showed that agents frequently learn WSLS or FH strategies if the exploration noise is sufficiently weak and the initial Q-function is biased toward these strategies. Our original motivation was to promote cooperation in the social dilemma game by introducing additional rewards, and the harmony game was considered as an ideal case in which these additional rewards completely resolve the dilemma at least in the one-shot game. The fact that such nontrivial strategies exist even in the harmony game makes the design of the additional reward complex. In particular, the FH strategy frequently chooses $D$ and consequently receives less reward than the All-C and WSLS strategies. 
If frequent defections were observed in more complex environments or under more sophisticated reinforcement learning algorithms, it would be difficult to determine whether they reflected a non-trivial strategy learned by the agents or were merely due to insufficient relaxation or algorithmic issues. 

The source code for this work is uploaded to \url{https://github.com/Hisato-Komatsu/Repeated_Harmony}.

\section*{Acknowledgements}
We would like to thank Editage (www.editage.jp) for English language editing.

\section*{Declaration of generative AI and AI-assisted technologies in the manuscript preparation process}

During the preparation of this work, the author used DeepL, Google Translate, Microsoft Copilot, and ChatGPT for English language editing, translation, research-related queries, and source code review and improvement suggestions.
After using these tools, the author reviewed and edited the content as needed and takes full responsibility for the content of the publication.

\appendix
%\section{appendix}

\section{Details of the numerical experiment \label{simulation_details} }

This section describes the pseudocode (Algorithm \ref{alg1}), and lists the hyperparameters (Table \ref{hyperparameters}) for reproducibility.
%\begin{figure}
\begin{algorithm}[H]
    \caption{Algorithm used for the calculation }
    \label{alg1}
\begin{algorithmic}[1]
    \State Initialize $Q_A$ and $Q_B$ using Eq.~(\ref{Q_initialization})
    \State $\epsilon _A, \epsilon _B \leftarrow \epsilon _{\mathrm{init} }$
    \State Set the initial observation $\left( a_0 , b_0 \right)$ randomly
    \For {$t \gets 1, ... , T_{\mathrm{max} } $}
        %\Repeat
        \State Choose agents' action $a_t$ and $b_t$ using $\epsilon$-greedy method with $\epsilon = \epsilon _A$ and $\epsilon = \epsilon _B$, respectively
        \State Calculate the rewards $r_A$ and $r_B$
        \State Update $Q_A$ and $Q_B$ using Eq.~(\ref{Q_update})
        \State $\epsilon _A \gets \max \left[ \epsilon_{\mathrm{min} }, (1 - \delta_{\epsilon}) \epsilon _A \right]$
        \State $\epsilon _B \gets \max \left[ \epsilon_{\mathrm{min} }, (1 - \delta_{\epsilon}) \epsilon _B \right]$
        \If{$t \mathbin{\%} T_{\mathrm{ep} } = 0$ } 
            \State truncate the episode and reset the environment by randomly reselecting $\left( a_t , b_t \right)$
        \EndIf
   \EndFor
\end{algorithmic}
\end{algorithm}
%\end{figure}

\begin{table*}
\centering
\hspace{-4.0cm}
\begin{tabular}{ccc}
 character & meaning & value \\ \hline\hline
$\gamma$ & discount factor & 0.95 \\ \hline
$\delta _{\epsilon}$ & decaying rate of $\epsilon$ per one step & $1 \times10^{-6}$  \\ \hline
$\epsilon _{\mathrm{min} }$ & minimum value of $\epsilon$ & 0.002  \\ \hline
$\alpha$ & learning rate & 0.1 \\ \hline
$T_{\mathrm{ep} }$ & episode length & 100 \\ \hline
$T_{\mathrm{max} }$ & whole time steps & $10^7$ \\ \hline
$R$ & -- & 3 \\ \hline
$T$ & -- & 2 \\ \hline
$S$ & -- & 1 \\ \hline
$P$ & -- & 0 \\ \hline

\end{tabular}
\vspace{1.0mm}
\caption{Hyperparameters of this study. }
\label{hyperparameters}
\end{table*}

\section{Optimal strategy against fixed strategy \label{best_response} }

To gain insights into the stability of a strategy, it is common practice to consider an agent's learning dynamics while the opponent's strategy is fixed \cite{Meylahn2022, Meylahn2025}. Therefore, we briefly discuss agent A's learning when agent B's strategy is fixed. Assuming that agent B follows a deterministic policy $\pi _B$, the Bellman optimality equation for agent A is given by Eq.~(\ref{Bellman1A}) with
\begin{equation}
\tilde{b} _1 = \pi _B \left( \left. a_0, b_0 \right. \right) ,
\end{equation}
where $\pi _B \left( \left. a,b \right. \right)$ denotes agent B's action when the previous action pair $\left( a,b \right)$ is given.

To represent agent B's strategy, we use the following vector:
\begin{equation}
\left( \pi _B \left( \left. C,C \right. \right) , \pi _B \left( \left. C,D \right. \right) , \pi _B \left( \left. D,C \right. \right) , \pi _B \left( \left. D,D \right. \right) \right) .
\end{equation}
For example, the WSLS strategy is expressed as $(C,D,D,C)$. First, we discuss the case in which agent B takes the strategies that genetate the symmetric solutions discussed in Sec.~\ref{sym_solutions}, WSLS, FH, and All-C.

\subsection{Optimal strategy against the WSLS strategy (The case $(C,D,D,C)$)}

When agent B's strategy is fixed as the WSLS strategy, the Bellman optimality equation is expressed as follows: 
\begin{equation}
\begin{aligned}
Q_A ^{\ast} \left( \left( C, C \right), C \right) & = Q_A ^{\ast} \left( \left( D, D \right), C \right) = R + \gamma \max _{a_2} Q_A ^{\ast} \left( \left( C, C \right), a_2 \right) , \\
Q_A ^{\ast} \left( \left( C, C \right), D \right) & = Q_A ^{\ast} \left( \left( D, D \right), D \right) = T + \gamma \max _{a_2} Q_A ^{\ast} \left( \left( D, C \right), a_2 \right) , \\
Q_A ^{\ast} \left( \left( C, D \right), C \right) & = Q_A ^{\ast} \left( \left( D, C \right), C \right) = S + \gamma \max _{a_2} Q_A ^{\ast} \left( \left( C, D \right), a_2 \right) , \\
Q_A ^{\ast} \left( \left( C, D \right), D \right) & = Q_A ^{\ast} \left( \left( D, C \right), D \right) = P + \gamma \max _{a_2} Q_A ^{\ast} \left( \left( D, D \right), a_2 \right) . \label{Bellman_WSLS}
\end{aligned}
\end{equation}
Using Eq.~(\ref{x_1234}), Eq.~(\ref{Bellman_WSLS}) is transformed into
\begin{equation}
\begin{aligned}
x_1 & = R + \gamma \max \left( x_1 , x_2 \right) , \\
x_2 & = T + \gamma \max \left( x_3 , x_4 \right) , \\
x_3 & = S + \gamma \max \left( x_3 , x_4 \right) , \\
x_4 & = P + \gamma \max \left( x_1 , x_2 \right) . 
\end{aligned}
\end{equation}
By solving these equations, we obtain the same form as Eq. (\ref{Q_WSLS}) with $x_1 > x_2$ and $x_3 < x_4$ when $\gamma > \frac{S- P}{R-P}$, and 
\begin{equation}
\begin{aligned}
x_1 & = Q_A ^{\ast} \left( \left( C, C \right), C \right) = Q_A ^{\ast} \left( \left( D, D \right), C \right) = \frac{1}{1 - \gamma} R ,  \\
x_2 & = Q_A ^{\ast} \left( \left( C, C \right), D \right) = Q_A ^{\ast} \left( \left( D, D \right), D \right) = T + \frac{\gamma}{1 - \gamma} S ,  \\
x_3 & = Q_A ^{\ast} \left( \left( C, D \right), C \right) = Q_A ^{\ast} \left( \left( D, C \right), C \right) = \frac{1}{1 - \gamma} S ,  \\
x_4 & = Q_A ^{\ast} \left( \left( C, D \right), D \right) = Q_A ^{\ast} \left( \left( D, C \right), D \right) = P + \frac{\gamma}{1 - \gamma} R . 
\end{aligned}
\end{equation}
with $x_1 > x_2$ and $x_3 > x_4$ when $\gamma < \frac{S- P}{R-P}$. Seeing the ordering of the Q-function values, the former and latter solutions correspond to the WSLS and All-C strategies, respectively.

\subsection{Optimal strategy against the FH strategy (The case $(D,D,D,C)$)}

In this case, the Bellman optimality equation is expressed as
\begin{equation}
\begin{aligned}
Q_A ^{\ast} \left( \left( D, D \right), C \right) & = R + \gamma \max _{a_2} Q_A ^{\ast} \left( \left( C, C \right), a_2 \right) , \\
Q_A ^{\ast} \left( \left( D, D \right), D \right) & = T + \gamma \max _{a_2} Q_A ^{\ast} \left( \left( D, C \right), a_2 \right) , \\
Q_A ^{\ast} \left( \left( C, C \right), C \right) & = Q_A ^{\ast} \left( \left( C, D \right), C \right) = Q_A ^{\ast} \left( \left( D, C \right), C \right) = S + \gamma \max _{a_2} Q_A ^{\ast} \left( \left( C, D \right), a_2 \right) , \\
Q_A ^{\ast} \left( \left( C, C \right), D \right) & = Q_A ^{\ast} \left( \left( C, D \right), D \right) = Q_A ^{\ast} \left( \left( D, C \right), D \right) = P + \gamma \max _{a_2} Q_A ^{\ast} \left( \left( D, D \right), a_2 \right) .
\end{aligned}
\end{equation}
By solving these equations, we obtain the same form as Eq.~(\ref{Q_FH}) with $x_1 > x_2$ and $x_3 < x_4$ when $\gamma > \frac{S- P}{R-S}$ and 
\begin{equation}
\begin{aligned}
x_1 & = Q_A ^{\ast} \left( \left( D, D \right), C \right) = R + \frac{\gamma}{1 - \gamma} S , \\
x_2 & = Q_A ^{\ast} \left( \left( D, D \right), D \right) = T + \frac{\gamma}{1 - \gamma} S , \\
x_3 & = Q_A ^{\ast} \left( \left( C, C \right), C \right) = Q_A ^{\ast} \left( \left( C, D \right), C \right) = Q_A ^{\ast} \left( \left( D, C \right), C \right) = \frac{1}{1 - \gamma} S , \\
x_4 & = Q_A ^{\ast} \left( \left( C, C \right), D \right) = Q_A ^{\ast} \left( \left( C, D \right), D \right) = Q_A ^{\ast} \left( \left( D, C \right), D \right) = P + \gamma R + \frac{\gamma ^2}{1 - \gamma} S , 
\end{aligned}
\end{equation}
with $x_1 > x_2$ and $x_3 > x_4$ when $\gamma < \frac{S- P}{R-S}$. The former and latter solutions correspond to the FH and All-C strategies, respectively.

\subsection{Optimal strategy against the All-C strategy (The case $(C, C, C, C)$)}

In this case, by solving the Bellman optimality equation, as in the two cases above, we obtain the same form as Eq.~(\ref{Q_AllC}), with $x_1 > x_2$.
This solution corresponds to the All-C strategy. In other words, when agent B adopts the All-C strategy, the optimal strategy for agent A is also the All-C one.

Note that, for All-C, WSLS, and FH, the condition under which the optimal strategy against each strategy coincides with the respective strategy is the same as that under which the symmetric solution discussed in Sec.~\ref{sym_solutions} emerges. This implies that the strategy providing the symmetric solution represents a Nash equilibrium, in which neither of the two agents can unilaterally change their strategy to increase their cumulative reward, when the strategy set is restricted to memory-1 strategies.

\subsection{Optimal strategies against other fixed strategies }

We also consider the case in which the strategy of agent B is fixed as one of 13 (deterministic) strategies that do not correspond to symmetric solutions. 

\subsubsection{The case $(C, C, C, D)$ }

In this case, we obtain
\begin{equation}
\begin{aligned}
x_1 & = Q_A ^{\ast} \left( \left( C, C \right), C \right) = Q_A ^{\ast} \left( \left( C, D \right), C \right) = Q_A ^{\ast} \left( \left( D, C \right), C \right) = \frac{1}{1 - \gamma} R , \\
x_2 & = Q_A ^{\ast} \left( \left( C, C \right), D \right) = Q_A ^{\ast} \left( \left( C, D \right), D \right) = Q_A ^{\ast} \left( \left( D, C \right), D \right) = T + \frac{\gamma}{1 - \gamma} R , \\
x_3 & = Q_A ^{\ast} \left( \left( D, D \right), C \right) = S + \frac{\gamma}{1 - \gamma} R , \\
x_4 & = Q_A ^{\ast} \left( \left( D, D \right), D \right) = P + \gamma S + \frac{\gamma^2}{1 - \gamma} R ,
\end{aligned}
\end{equation}
with $x_1 > x_2$ and $x_3 > x_4$. This solution corresponds to the All-C strategy.

\subsubsection{The case $(C, C, D, C)$ }

In this case, we obtain
\begin{equation}
\begin{aligned}
x_1 & = Q_A ^{\ast} \left( \left( C, C \right), C \right) = Q_A ^{\ast} \left( \left( C, D \right), C \right) = Q_A ^{\ast} \left( \left( D, D \right), C \right) = \frac{1}{1 - \gamma} R , \\
x_2 & = Q_A ^{\ast} \left( \left( C, C \right), D \right) = Q_A ^{\ast} \left( \left( C, D \right), D \right) = Q_A ^{\ast} \left( \left( D, D \right), D \right) = T + \gamma S + \frac{\gamma ^2}{1 - \gamma} R , \\
x_3 & = Q_A ^{\ast} \left( \left( D, C \right), C \right) = S + \frac{\gamma}{1 - \gamma} R , \\
x_4 & = Q_A ^{\ast} \left( \left( D, C \right), D \right) = P + \frac{\gamma}{1 - \gamma} R ,
\end{aligned}
\end{equation}
with $x_1 > x_2$ and $x_3 > x_4$. This solution corresponds to the All-C strategy.

\subsubsection{The case $(C, D, C, C)$ }

In this case, we obtain
\begin{equation}
\begin{aligned}
x_1 & = Q_A ^{\ast} \left( \left( C, C \right), C \right) = Q_A ^{\ast} \left( \left( D, C \right), C \right) = Q_A ^{\ast} \left( \left( D, D \right), C \right) = \frac{1}{1 - \gamma} R , \\
x_2 & = Q_A ^{\ast} \left( \left( C, C \right), D \right) = Q_A ^{\ast} \left( \left( D, C \right), D \right) = Q_A ^{\ast} \left( \left( D, D \right), D \right) = T + \frac{\gamma}{1 - \gamma} R , \\
x_3 & = Q_A ^{\ast} \left( \left( C, D \right), C \right) = \frac{1}{1 - \gamma} S , \\
x_4 & = Q_A ^{\ast} \left( \left( C, D \right), D \right) = P + \frac{\gamma}{1 - \gamma} R ,
\end{aligned}
\end{equation}
with $x_1 > x_2$ and $x_3 > x_4$, when $\gamma < \frac{S-P}{R-P}$, and
\begin{equation}
\begin{aligned}
x_1 & = Q_A ^{\ast} \left( \left( C, C \right), C \right) = Q_A ^{\ast} \left( \left( D, C \right), C \right) = Q_A ^{\ast} \left( \left( D, D \right), C \right) = \frac{1}{1 - \gamma} R , \\
x_2 & = Q_A ^{\ast} \left( \left( C, C \right), D \right) = Q_A ^{\ast} \left( \left( D, C \right), D \right) = Q_A ^{\ast} \left( \left( D, D \right), D \right) = T + \frac{\gamma}{1 - \gamma} R , \\
x_3 & = Q_A ^{\ast} \left( \left( C, D \right), C \right) = S + \gamma P + \frac{\gamma^2}{1 - \gamma} R , \\
x_4 & = Q_A ^{\ast} \left( \left( C, D \right), D \right) = P + \frac{\gamma}{1 - \gamma} R ,
\end{aligned}
\end{equation}
with $x_1 > x_2$ and $x_3 < x_4$, when $\gamma > \frac{S-P}{R-P}$. The former and latter solutions correspond to the All-C and $(C, C, D, C)$ strategies.

\subsubsection{The case $(D, C, C, C)$ }

In this case, we obtain
\begin{equation}
\begin{aligned}
x_1 & = Q_A ^{\ast} \left( \left( C, D \right), C \right) = Q_A ^{\ast} \left( \left( D, C \right), C \right) = Q_A ^{\ast} \left( \left( D, D \right), C \right) = \frac{1}{1 - \gamma ^2} (R + \gamma S) , \\
x_2 & = Q_A ^{\ast} \left( \left( C, D \right), D \right) = Q_A ^{\ast} \left( \left( D, C \right), D \right) = Q_A ^{\ast} \left( \left( D, D \right), D \right) = T + \frac{\gamma}{1 - \gamma ^2} (R + \gamma S) , \\
x_3 & = Q_A ^{\ast} \left( \left( C, C \right), C \right) = \frac{1}{1 - \gamma ^2} (S + \gamma R) , \\
x_4 & = Q_A ^{\ast} \left( \left( C, C \right), D \right) = P + \frac{\gamma}{1 - \gamma ^2} (R + \gamma S) ,
\end{aligned}
\end{equation}
with $x_1 > x_2$ and $x_3 > x_4$, when $\gamma (T - S) < R - T$, and
\begin{equation}
\begin{aligned}
x_1 & = Q_A ^{\ast} \left( \left( C, D \right), C \right) = Q_A ^{\ast} \left( \left( D, C \right), C \right) = Q_A ^{\ast} \left( \left( D, D \right), C \right) = R + \gamma S + \frac{\gamma ^2}{1 - \gamma} T , \\
x_2 & = Q_A ^{\ast} \left( \left( C, D \right), D \right) = Q_A ^{\ast} \left( \left( D, C \right), D \right) = Q_A ^{\ast} \left( \left( D, D \right), D \right) = \frac{1}{1 - \gamma} T , \\
x_3 & = Q_A ^{\ast} \left( \left( C, C \right), C \right) = S + \frac{\gamma}{1 - \gamma} T , \\
x_4 & = Q_A ^{\ast} \left( \left( C, C \right), D \right) = P + \frac{\gamma}{1 - \gamma} T ,
\end{aligned}
\end{equation}
with $x_1 < x_2$ and $x_3 > x_4$, when $\gamma (T - S) > R - T$. The former and latter solutions correspond to the All-C and Grim strategies, respectively. Here, following Refs.~\cite{Usui2021,Meylahn2022,Meylahn2025}, we use the term ``Grim strategy'' to refer to the 1-memory strategy that chooses $C$ if and only if the previous joint action is $\left( C, C \right)$, i.e., the $(C,D,D,D)$ strategy. This terminology differs from the conventional Grim Trigger strategy in which a deviation triggers permanent punishment \cite{Harrington1991,McGillivray2000}.

\subsubsection{The case $(C, C, D, D)$ }

In this case, agent B responds with the Tit-for-Tat (TFT) strategy \cite{Axelrod1981}. By solving the Bellman optimality equation, we obtain
\begin{equation}
\begin{aligned}
x_1 & = Q_A ^{\ast} \left( \left( C, C \right), C \right) = Q_A ^{\ast} \left( \left( C, D \right), C \right) = \frac{1}{1 - \gamma} R , \\
x_2 & = Q_A ^{\ast} \left( \left( C, C \right), D \right) = Q_A ^{\ast} \left( \left( C, D \right), D \right) = T + \gamma S + \frac{\gamma ^2}{1 - \gamma} R , \\
x_3 & = Q_A ^{\ast} \left( \left( D, C \right), C \right) = Q_A ^{\ast} \left( \left( D, D \right), C \right) = S + \frac{\gamma}{1 - \gamma} R , \\
x_4 & = Q_A ^{\ast} \left( \left( D, C \right), D \right) = Q_A ^{\ast} \left( \left( D, D \right), D \right) = P + \gamma S + \frac{\gamma ^2}{1 - \gamma} R ,
\end{aligned}
\end{equation}
with $x_1 > x_2$ and $x_3 > x_4$. This solution corresponds to the All-C strategy.

\subsubsection{The case $(C, D, C, D)$}

In this case, we obtain
\begin{equation}
\begin{aligned}
x_1 & = Q_A ^{\ast} \left( \left( C, C \right), C \right) = Q_A ^{\ast} \left( \left( D, C \right), C \right) = \frac{1}{1 - \gamma} R , \\
x_2 & = Q_A ^{\ast} \left( \left( C, C \right), D \right) = Q_A ^{\ast} \left( \left( D, C \right), D \right) = T + \frac{\gamma}{1 - \gamma} R , \\
x_3 & = Q_A ^{\ast} \left( \left( C, D \right), C \right) = Q_A ^{\ast} \left( \left( D, D \right), C \right) = \frac{1}{1 - \gamma} S , \\
x_4 & = Q_A ^{\ast} \left( \left( C, D \right), D \right) = Q_A ^{\ast} \left( \left( D, D \right), D \right) = P + \frac{\gamma}{1 - \gamma} S , 
\end{aligned}
\end{equation}
with $x_1 > x_2$ and $x_3 > x_4$. This solution corresponds to the All-C strategy.

\subsubsection{The case $(D, C, C, D)$ }

In this case, we obtain
\begin{equation}
\begin{aligned}
x_1 & = Q_A ^{\ast} \left( \left( C, D \right), C \right) = Q_A ^{\ast} \left( \left( D, C \right), C \right) = \frac{1}{1 - \gamma^2} (R + \gamma S) , \\
x_2 & = Q_A ^{\ast} \left( \left( C, D \right), D \right) = Q_A ^{\ast} \left( \left( D, C \right), D \right) = T + \frac{\gamma}{1 - \gamma^2} (R + \gamma S) , \\
x_3 & = Q_A ^{\ast} \left( \left( C, C \right), C \right) =  Q_A ^{\ast} \left( \left( D, D \right), C \right) = \frac{1}{1 - \gamma^2} (S + \gamma R) , \\
x_4 & = Q_A ^{\ast} \left( \left( C, C \right), D \right) = Q_A ^{\ast} \left( \left( D, D \right), D \right) = P + \frac{\gamma}{1 - \gamma^2} (S + \gamma R) , 
\end{aligned}
\end{equation}
with $x_1 > x_2$ and $x_3 > x_4$, when $R - T > \gamma (T - S)$, and 
\begin{equation}
\begin{aligned}
x_1 & = Q_A ^{\ast} \left( \left( C, D \right), C \right) = Q_A ^{\ast} \left( \left( D, C \right), C \right) = R + \gamma S + \frac{\gamma ^2}{1 - \gamma} T , \\
x_2 & = Q_A ^{\ast} \left( \left( C, D \right), D \right) = Q_A ^{\ast} \left( \left( D, C \right), D \right) = \frac{1}{1 - \gamma} T , \\
x_3 & = Q_A ^{\ast} \left( \left( C, C \right), C \right) =  Q_A ^{\ast} \left( \left( D, D \right), C \right) = S + \frac{\gamma}{1 - \gamma} T , \\
x_4 & = Q_A ^{\ast} \left( \left( C, C \right), D \right) = Q_A ^{\ast} \left( \left( D, D \right), D \right) = P + \gamma S + \frac{\gamma ^2}{1 - \gamma} T , 
\end{aligned}
\end{equation}
with $x_1 < x_2$ and $x_3 > x_4$, when $R - T < \gamma (T - S)$. The former and latter solutions correspond to the All-C and WSLS strategies.

\subsubsection{The case $(D, C, D, C)$}

In this case, we obtain
\begin{equation}
\begin{aligned}
x_1 & = Q_A ^{\ast} \left( \left( C, D \right), C \right) = Q_A ^{\ast} \left( \left( D, D \right), C \right) = \frac{1}{1 - \gamma^2} (R + \gamma S) , \\
x_2 & = Q_A ^{\ast} \left( \left( C, D \right), D \right) = Q_A ^{\ast} \left( \left( D, D \right), D \right) = T + \frac{\gamma}{1 - \gamma^2} (S + \gamma R) , \\
x_3 & = Q_A ^{\ast} \left( \left( C, C \right), C \right) = Q_A ^{\ast} \left( \left( D, C \right), C \right) = \frac{1}{1 - \gamma^2} (S + \gamma R) , \\
x_4 & = Q_A ^{\ast} \left( \left( C, C \right), D \right) = Q_A ^{\ast} \left( \left( D, C \right), D \right) = P + \frac{\gamma}{1 - \gamma^2} (R + \gamma S) , 
\end{aligned}
\end{equation}
with $x_1 > x_2$ and $x_3 > x_4$. This solution corresponds to the All-C strategy.

\subsubsection{The case $(D, D, C, C)$ }

In this case, we obtain
\begin{equation}
\begin{aligned}
x_1 & = Q_A ^{\ast} \left( \left( D, C \right), C \right) = Q_A ^{\ast} \left( \left( D, D \right), C \right) = R + \frac{\gamma}{1 - \gamma} S , \\
x_2 & = Q_A ^{\ast} \left( \left( D, C \right), D \right) = Q_A ^{\ast} \left( \left( D, D \right), D \right) = T + \gamma R + \frac{\gamma ^2}{1 - \gamma} S , \\
x_3 & = Q_A ^{\ast} \left( \left( C, C \right), C \right) = Q_A ^{\ast} \left( \left( C, D \right), C \right) = \frac{1}{1 - \gamma} S , \\
x_4 & = Q_A ^{\ast} \left( \left( C, C \right), D \right) = Q_A ^{\ast} \left( \left( C, D \right), D \right) = P + \gamma R + \frac{\gamma ^2}{1 - \gamma} S ,
\end{aligned}
\end{equation}
with $x_1 > x_2$ and $x_3 > x_4$, when $T - \gamma S < ( 1 - \gamma ) R$ and $S > \frac{P + \gamma R}{1 + \gamma}$, 
\begin{equation}
\begin{aligned}
x_1 & = Q_A ^{\ast} \left( \left( D, C \right), C \right) = Q_A ^{\ast} \left( \left( D, D \right), C \right) = \frac{1}{1 - \gamma^2} (R + \gamma P) , \\
x_2 & = Q_A ^{\ast} \left( \left( D, C \right), D \right) = Q_A ^{\ast} \left( \left( D, D \right), D \right) = T + \frac{\gamma}{1 - \gamma^2} (R + \gamma P) , \\
x_3 & = Q_A ^{\ast} \left( \left( C, C \right), C \right) = Q_A ^{\ast} \left( \left( C, D \right), C \right) = S + \frac{\gamma}{1 - \gamma^2} (P + \gamma R) , \\
x_4 & = Q_A ^{\ast} \left( \left( C, C \right), D \right) = Q_A ^{\ast} \left( \left( C, D \right), D \right) = \frac{1}{1 - \gamma^2} (P + \gamma R) ,
\end{aligned}
\end{equation}
with $x_1 > x_2$ and $x_3 < x_4$, when $T < \frac{R + \gamma P}{1 + \gamma}$ and $S < \frac{P + \gamma R}{1 + \gamma}$, 
\begin{equation}
\begin{aligned}
x_1 & = Q_A ^{\ast} \left( \left( D, C \right), C \right) = Q_A ^{\ast} \left( \left( D, D \right), C \right) = R + \frac{\gamma}{1 - \gamma} S , \\
x_2 & = Q_A ^{\ast} \left( \left( D, C \right), D \right) = Q_A ^{\ast} \left( \left( D, D \right), D \right) = \frac{1}{1 - \gamma} T , \\
x_3 & = Q_A ^{\ast} \left( \left( C, C \right), C \right) = Q_A ^{\ast} \left( \left( C, D \right), C \right) = \frac{1}{1 - \gamma} S , \\
x_4 & = Q_A ^{\ast} \left( \left( C, C \right), D \right) = Q_A ^{\ast} \left( \left( C, D \right), D \right) = P + \frac{\gamma}{1 - \gamma} T ,
\end{aligned}
\end{equation}
with $x_1 < x_2$ and $x_3 > x_4$, when $T - \gamma S > ( 1 - \gamma ) R$ and $S - \gamma T > ( 1 - \gamma ) P$, and
\begin{equation}
\begin{aligned}
x_1 & = Q_A ^{\ast} \left( \left( D, C \right), C \right) = Q_A ^{\ast} \left( \left( D, D \right), C \right) = R + \gamma P + \frac{\gamma ^2}{1 - \gamma} T , \\
x_2 & = Q_A ^{\ast} \left( \left( D, C \right), D \right) = Q_A ^{\ast} \left( \left( D, D \right), D \right) = \frac{1}{1 - \gamma} T , \\
x_3 & = Q_A ^{\ast} \left( \left( C, C \right), C \right) = Q_A ^{\ast} \left( \left( C, D \right), C \right) = S + \gamma P + \frac{\gamma ^2}{1 - \gamma} T , \\
x_4 & = Q_A ^{\ast} \left( \left( C, C \right), D \right) = Q_A ^{\ast} \left( \left( C, D \right), D \right) = P + \frac{\gamma}{1 - \gamma} T ,
\end{aligned}
\end{equation}
with $x_1 < x_2$ and $x_3 < x_4$, when $T > \frac{R + \gamma P}{1 + \gamma}$ and $S - \gamma T < ( 1 - \gamma ) P$.
These solutions correspond to the All-C, $(D, C, D, C)$, $(C, D, C, D)$, and All-D strategies.

\subsubsection{The case $(C, D, D, D)$ }

In this case, agent B's stragtegy corresponds to the Grim one. Solving the Bellman optimality equation, we obtain
\begin{equation}
\begin{aligned}
x_1 & = Q_A ^{\ast} \left( \left( C, C \right), C \right) = \frac{1}{1 - \gamma} R , \\
x_2 & = Q_A ^{\ast} \left( \left( C, C \right), D \right) = T + \frac{\gamma}{1 - \gamma} S , \\
x_3 & = Q_A ^{\ast} \left( \left( C, D \right), C \right) = Q_A ^{\ast} \left( \left( D, C \right), C \right) = Q_A ^{\ast} \left( \left( D, D \right), C \right) = \frac{1}{1 - \gamma} S , \\
x_4 & = Q_A ^{\ast} \left( \left( C, D \right), D \right) = Q_A ^{\ast} \left( \left( D, C \right), D \right) = Q_A ^{\ast} \left( \left( D, D \right), D \right) = P + \frac{\gamma}{1 - \gamma} S , 
\end{aligned}
\end{equation}
with $x_1 > x_2$ and $x_3 > x_4$. This solution corresponds to the All-C strategy.

\subsubsection{The case $(D, C, D, D)$}

In this case, we obtain
\begin{equation}
\begin{aligned}
x_1 & = Q_A ^{\ast} \left( \left( C, D \right), C \right) = \frac{1}{1 - \gamma^2} (R + \gamma S) , \\
x_2 & = Q_A ^{\ast} \left( \left( C, D \right), D \right) = T + \frac{\gamma}{1 - \gamma^2} (S + \gamma R) , \\
x_3 & = Q_A ^{\ast} \left( \left( C, C \right), C \right) = Q_A ^{\ast} \left( \left( D, C \right), C \right) = Q_A ^{\ast} \left( \left( D, D \right), C \right) = \frac{1}{1 - \gamma^2} (S + \gamma R) , \\
x_4 & = Q_A ^{\ast} \left( \left( C, C \right), D \right) = Q_A ^{\ast} \left( \left( D, C \right), D \right) = Q_A ^{\ast} \left( \left( D, D \right), D \right) = P + \frac{\gamma}{1 - \gamma^2} (S + \gamma R) , 
\end{aligned}
\end{equation}
with $x_1 > x_2$ and $x_3 > x_4$. This solution corresponds to the All-C strategy.

\subsubsection{The case $(D, D, C, D)$}

In this case, we obtain
\begin{equation}
\begin{aligned}
x_1 & = Q_A ^{\ast} \left( \left( D, C \right), C \right) = R + \frac{\gamma}{1 - \gamma} S , \\
x_2 & = Q_A ^{\ast} \left( \left( D, C \right), D \right) = T + \gamma R + \frac{\gamma^2}{1 - \gamma} S , \\
x_3 & = Q_A ^{\ast} \left( \left( C, C \right), C \right) = Q_A ^{\ast} \left( \left( C, D \right), C \right) = Q_A ^{\ast} \left( \left( D, D \right), C \right) = \frac{1}{1 - \gamma} S , \\
x_4 & = Q_A ^{\ast} \left( \left( C, C \right), D \right) = Q_A ^{\ast} \left( \left( C, D \right), D \right) = Q_A ^{\ast} \left( \left( D, D \right), D \right) = P + \frac{\gamma}{1 - \gamma} S , 
\end{aligned}
\end{equation}
with $x_1 > x_2$ and $x_3 > x_4$, when $\gamma < \frac{R - T}{R - S}$, and 
\begin{equation}
\begin{aligned}
x_1 & = Q_A ^{\ast} \left( \left( D, C \right), C \right) = R + \frac{\gamma}{1 - \gamma} S , \\
x_2 & = Q_A ^{\ast} \left( \left( D, C \right), D \right) = \frac{1}{1 - \gamma} T , \\
x_3 & = Q_A ^{\ast} \left( \left( C, C \right), C \right) = Q_A ^{\ast} \left( \left( C, D \right), C \right) = Q_A ^{\ast} \left( \left( D, D \right), C \right) = \frac{1}{1 - \gamma} S , \\
x_4 & = Q_A ^{\ast} \left( \left( C, C \right), D \right) = Q_A ^{\ast} \left( \left( C, D \right), D \right) = Q_A ^{\ast} \left( \left( D, D \right), D \right) = P + \frac{\gamma}{1 - \gamma} S , 
\end{aligned}
\end{equation}
with $x_1 < x_2$ and $x_3 > x_4$, when $\gamma > \frac{R - T}{R - S}$. The former and latter solutions correspond to the All-C and $(C, D, C, C)$ strategies.

\subsubsection{The case $(D, D, D, D)$ (the All-D strategy)}

In this case, we obtain
\begin{equation}
\begin{aligned}
x_3 & = Q_A ^{\ast} \left( \left( C, C \right), C \right) = Q_A ^{\ast} \left( \left( C, D \right), C \right) = Q_A ^{\ast} \left( \left( D, C \right), C \right) = Q_A ^{\ast} \left( \left( D, D \right), C \right) = \frac{1}{1 - \gamma} S , \\
x_4 & = Q_A ^{\ast} \left( \left( C, C \right), D \right) = Q_A ^{\ast} \left( \left( C, D \right), D \right) = Q_A ^{\ast} \left( \left( D, C \right), D \right) = Q_A ^{\ast} \left( \left( D, D \right), D \right) = P + \frac{\gamma}{1 - \gamma} S , \\
\end{aligned}
\end{equation}
with $x_3 > x_4$. This solution corresponds to the All-C strategy.

In light of the above discussion, we can draw a directed graph in which the edges go from each strategy to its optimal response, as shown in Fig.~\ref{BRN}. This graph is referred to as the best-response network in Ref.~\cite{Meylahn2022}. Fig.~\ref{BRN} shows that the graph contains three absorbing states corresponding to the symmetric solution of Sec.~\ref{sym_solutions}, that is, the All-C, WSLS, and FH strategies. Therefore, if the agents update their strategies by taking the best response to the present opponent, as in Ref.~\cite{Meylahn2022}, their strategies converge to one of these three or to the cycle among them.

\begin{figure}
\includegraphics[width = 15.0cm]{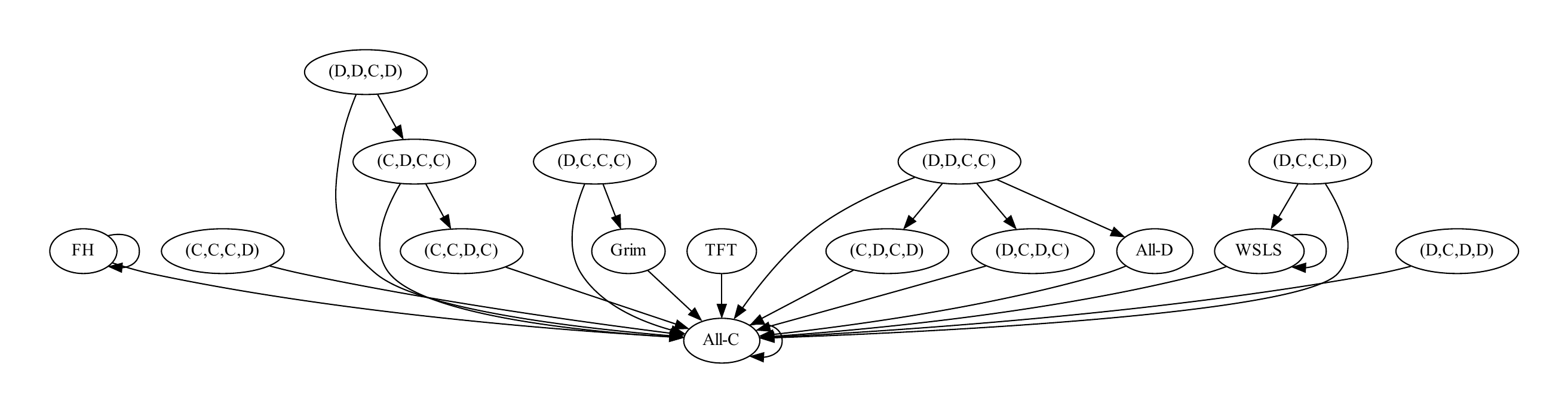}
\caption{The best-response network of the Harmony game when the agents' strategies are deterministic. }
\label{BRN}
\end{figure}

\end{document}